\documentclass[
  aps,
  prl,
  reprint,
  superscriptaddress,
  nofootinbib
]{revtex4-2}

\usepackage{amsmath}
\usepackage{amssymb}
\usepackage{amsthm}
\usepackage{graphicx}
\usepackage{booktabs}
\usepackage{bm}
\usepackage{mathtools}
\usepackage{hyperref}

\usepackage[qm]{qcircuit}

\usepackage{tikz}

\usepackage{xcolor}

\newcommand*\qcontrolcolor[2]{%
  \push{%
    \tikz[baseline=(char.base)]{%
      \node[
        shape=circle,
        draw,
        inner sep=1.5pt,
        color=#1
      ] (char) {\scriptsize #2};%
    }%
  }%
  \qw
}

\newcommand*\qcontrolcolour[2]{%
  \push{%
    \tikz[baseline=(char.base)]{%
      \node[
        shape=circle,
        draw,
        inner sep=.6pt,
        color=#1
      ] (char) {\scriptsize #2};%
    }%
  }%
  \qw
}

\newcommand*\onecontrol{%
  \qcontrolcolour{red!80!black}{1}%
}

\newcommand*\twocontrol{%
  \qcontrolcolor{blue!100!white}{2}%
}

\newcommand{\ket}[1]{\lvert #1\rangle}
\newcommand{\bra}[1]{\langle #1\rvert}
\newcommand{\proj}[1]{\lvert #1\rangle\!\langle #1\rvert}

\newcommand{\INC}{\mathrm{INC}}
\newcommand{\SUM}{\mathrm{SUM}}
\newcommand{\MCT}{\mathrm{MCT}}

\newcommand{\Csel}{C_{2}(\INC)}
\newcommand{\Cstrict}{C_{2}(X_{01})}

\newcommand{\optionalfigure}[2]{%
  \IfFileExists{#1}{%
    \includegraphics[width=#2]{#1}%
  }{%
    \fbox{%
      \parbox[c][1.10in][c]{0.90\linewidth}{%
        \centering
        Replace with vector figure\\[0.35em]
        \texttt{\detokenize{#1}}%
      }%
    }%
  }%
}

\newtheorem{theorem}{Theorem}
\newtheorem{proposition}{Proposition}
\newtheorem{lemma}{Lemma}

\begin{document}

\title{
Efficient Synthesis of Multi-Controlled Toffoli Gates with Ternary Clifford$+P_9$ Gates
}

\author{Amit Saha}
\author{Francesco Arzani}
\affiliation{
DI-ENS, École Normale Supérieure, Université PSL, CNRS, INRIA, 45 rue d'Ulm, 75005 Paris, France
}

\date{\today}

\begin{abstract}
Temporary occupation of qutrit levels can reduce the width and depth required for binary circuit logic in quantum computing. We introduce an efficient intermediate-qutrit decomposition of multi-controlled Toffoli gates with binary-subspace inputs and outputs. For balanced control widths $n=2^h-1$, the proposed decomposition is evaluated hierarchically through a tree, allowing independent subtree computations to proceed in parallel and reducing the depth from linear to logarithmic. The balanced low-depth construction requires $6n+3$ logical $P_9$ injections and $\frac{n-3}{4}$ clean ancillary qutrits. It matches the direct $P_9$ count of a recursively extended Clifford+$P_9$ baseline derived from the state-of-the-art work while using asymptotically one-quarter as many clean ancillas and replacing linear depth with logarithmic depth. For arbitrary control widths, a sequential extension preserves the $6n+3$ $P_9$ count but has depth $O(\log m+n-m)$ for a balanced core $m=2^h-1\leq n$, which is linear in the worst case over $n$. By reducing the resource overhead of a widely used reversible primitive, the proposed decomposition provides a practical building block for the design and compilation of more resource-efficient quantum algorithms in the fault-tolerant regime.
\end{abstract}

\maketitle

\section{Introduction}

Multi-controlled Toffoli (MCT) gates are fundamental primitives in
quantum circuit design. They appear in reversible arithmetic \cite{shor},
oracle construction \cite{grover1996fastquantummechanicalalgorithm}, amplitude amplification \cite{Brassard_2002}, quantum search \cite{Magniez_2011}, and
fault-tolerant compilation \cite{FT}. Their decomposition becomes increasingly
expensive as the number of controls grows, since the implementation
must balance circuit depth, non-Clifford resource cost, and ancillary
workspace \cite{Barenco1995}. These tradeoffs are particularly important in
fault-tolerant architectures, where non-Clifford operations generally
require resource-state preparation and injection \cite{gottesman}.

In conventional binary-only fault-tolerant architectures, multi-controlled Toffoli gates are typically decomposed into circuits of smaller Toffoli gates, which are subsequently compiled into the Clifford+$T$ gate set \cite{Selinger2013, Jones2013}. Such decompositions involve a tradeoff among $T$-gate cost, ancillary-qubit requirements, and circuit depth, with the resource overhead generally increasing as the number of controls grows. Recent binary constructions based on conditionally clean ancillae achieve logarithmic-depth MCT synthesis with substantially fewer clean
ancilla qubits~\cite{NieZiSun2024, KhattarGidney2025, Dutta2025}. These dynamic-circuit approaches, however, employ measurement-assisted uncomputation and classical feed-forward. Therefore, the practical overhead can depend strongly on measurement latency, feed-forward cost, coherent-scheduling requirements, and the additional error-correction with mid-circuit measurements. A complementary recent direction shows that, if a small approximation error is allowed, randomized Clifford+$T$ implementations can realize multi-qubit Toffoli gates using only
$O(\log(1/\epsilon))$ $T$ gates~\cite{GossetKothariZhang2025}; this represents a distinct approximate-synthesis regime from the exact construction considered here in this paper. A useful exact construction strategy is to embed binary information into multilevel quantum systems and temporarily access noncomputational levels during the circuit execution \cite{MajumdarHybridFT}. In the qutrit setting, the logical states
remain encoded in the binary subspace
\begin{equation}\nonumber
\mathcal{H}_{\mathrm{bin}}
=
\operatorname{span}
\left\{
\ket{0},
\ket{1}
\right\}
\subset
\mathcal{H}_{3},
\end{equation}
where
\begin{equation}\nonumber
\mathcal{H}_{3}
=
\operatorname{span}
\left\{
\ket{0},
\ket{1},
\ket{2}
\right\}.
\end{equation}
The level
$\ket{2}$
is occupied only transiently and serves as a workspace state that
records whether selected groups of control qutrits are all in the
state $\ket{1}$.
The circuit input and output
states therefore remain confined to the binary subspace, while the
additional qutrit level is used internally to reduce the complexity
of the decomposition.

A complementary fault-tolerant perspective was developed by Bocharov \textit{et al.} within the ternary Clifford+$P_9$ framework~\cite{Campbell2012, Bocharov2017}.
In \cite{Bocharov2017}, they provided an ancilla-free binary-subspace Toffoli emulation requiring
$15P_9$
injections and with a one clean ancilla requiring
$12P_9$
injections.
Their recursive control-extension construction adds one binary
control using one additional clean ancillary qutrit and an incremental
cost of
$6P_9$.
Starting from the ancilla-free Toffoli block, repeated extensions
therefore give an
$n$-control
MCT baseline with cost
$6n+3$
and
$n-2$
clean ancillary qutrits.
Starting from the one-clean-ancilla block gives an alternative
baseline with cost
$6n$
and
$n-1$
clean ancillary qutrits. Because the additional controls are incorporated recursively, the overall circuit depth remains linear in the number of control qubits. 

The objective of the present work is to provide the logarithmic-depth tree structure of intermediate-qutrit decompositions with an exact resource analysis in the ternary Clifford+$P_9$ model.  We also provide an extension to arbitrary control widths and characterize the additional sequential-depth overhead explicitly introduced away from the balanced family. We evaluate the controls through a tree, combining partial results level by level until the circuit determines whether all controls are active, motivated by some earlier works \cite{Gokhale2019, SahaPRA2022}.
At the lowest level of the tree, each leaf block checks a small group of controls and temporarily records whether all of them are active. At each higher level, the circuit combines the results from two smaller groups with one additional control, using a clean ancillary qutrit as temporary workspace. For balanced widths
$n=2^{h}-1$, where $h$ is the height of the tree,
the resulting construction requires
\begin{equation}\nonumber
N_{P_9}^{\mathrm{tree}}(n)
=
6n+3
\end{equation}
logical
$P_9$
injections,
\begin{equation}\nonumber
N_{\mathrm{anc}}^{\mathrm{tree}}(n)
=
\frac{n-3}{4}
\end{equation}
clean ancillary qutrits, and logarithmic depth.
Thus, the construction matches the exact
$P_9$
count of the recursively extended ancilla-free baseline derived from~\cite{Bocharov2017}, while reducing the asymptotic ancillary
requirement by a factor of four and replacing linear depth
with logarithmic depth.
The same tree structure can also use fewer ancillary qutrits by reusing them during the computation, although this requires additional operations.

The remainder of this paper is organized as follows.
Section~II introduces the qutrit primitives and the binary-subspace
computation model.
Section~III presents the intermediate-qutrit MCT decomposition and
proves its correctness.
Section~IV derives the exact Clifford+$P_9$ cost, the logarithmic-depth
schedule, the constructive width--cost frontier, and the extension to
arbitrary control widths.
The same section compares the proposed construction with recursive
ternary baselines derived from
\cite{Bocharov2017}.
The Supplemental Material~\cite{SupplementalMaterial} provides representative circuits, detailed derivations, extensions to arbitrary control widths, and conditional comparisons with ternary and binary-only fault-tolerant implementations.

\section{Background}

The construction uses three elementary operations:
the ternary SUM gate,
a state-selective controlled increment,
and a strict controlled toggle of the binary target.
The first operation is a Clifford gate in the ternary setting.
The latter two operations are non-Clifford and determine the
fault-tolerant resource cost of the decomposition.

\paragraph{Qutrit primitives.}

The ternary increment gate is
\begin{equation}\nonumber
\INC\ket{j}
=
\ket{j+1\!\!\!\pmod 3}.
\label{eq:inc}
\end{equation}

The ternary SUM gate is
\begin{equation}\nonumber
\SUM
=
\Lambda(\INC),
\end{equation}
with action
\begin{equation}\nonumber
\SUM\ket{x,y}
=
\ket{x,x+y\!\!\!\pmod 3}.
\label{eq:sum}
\end{equation}
SUM belongs to the ternary Clifford group \cite{Bocharov2017}.
On binary inputs,
\begin{equation}\nonumber
\SUM\ket{1,1}
=
\ket{1,2}.
\label{eq:sum-marker}
\end{equation}

The state-selective controlled increment is
\begin{equation}\nonumber
C_{\ell}(\INC)\ket{x,y}
=
\ket{
x,
y+\delta_{x,\ell}\!\!\!\pmod 3
},
\qquad
\ell\in\{0,1,2\}.
\label{eq:selective-inc}
\end{equation}
In particular,
\begin{equation}\nonumber
\Csel\ket{2,1}
=
\ket{2,2}.
\label{eq:propagate-marker}
\end{equation}
Unlike SUM,
$\Csel$
is non-Clifford.

Finally, define
\begin{equation}\nonumber
X_{01}
=
\ket{0}\!\bra{1}
+
\ket{1}\!\bra{0}
+
\ket{2}\!\bra{2}.
\label{eq:x01}
\end{equation}
The strict controlled toggle of binary target is
\begin{equation}\nonumber
\Cstrict\ket{x,t}
=
\ket{x}
\otimes
X_{01}^{\delta_{x,2}}
\ket{t}.
\label{eq:strict-toggle}
\end{equation}
The distinction between
$\SUM$,
$\Csel$, and $\Cstrict$
is important for the resource analysis.
The SUM gate is Clifford and therefore contributes no
extra resource
cost in fault-tolerant setting, whereas
$\Csel$ and $\Cstrict$
are non-Clifford. The proposed decomposition is designed to minimize the number of
state-selective controlled increments while allowing independent
branches of the tree to be evaluated in parallel, whereas the number of $\Cstrict$ is always one irrespective of the number of controls of an MCT gate. 

\paragraph{Fault-tolerant Clifford+$P_9$ model.} To implement $\Csel$ and $\Cstrict$ non-Clifford gates in a fault-tolerant setting, we need to use $P_9$ operations, which are the costly resources, whereas ternary Clifford gates
are treated as comparatively inexpensive.
The number of logical
$P_9$
injections therefore provides the principal non-Clifford cost metric
used in this work. The non-Clifford phase gate is
\begin{equation}\nonumber
P_9
=
\omega_9^{-1}\proj{0}
+
\proj{1}
+
\omega_9\proj{2},
\qquad
\omega_9=e^{2\pi i/9}.
\label{eq:p9}
\end{equation}

The required logical costs are
\begin{equation}\nonumber
N_{P_9}[\SUM]
=
N_{P_9}[\SUM^\dagger]
=
0,
\label{eq:sum-cost}
\end{equation}
\begin{equation}\nonumber
N_{P_9}
\left[
\Csel^{\pm1}
\right]
=
3,
\label{eq:selective-cost}
\end{equation}
and
\begin{equation}\nonumber
N_{P_9}
\left[
\Cstrict
\right]
=
15.
\label{eq:strict-cost}
\end{equation}

\section{Proposed Decomposition}

For the ease of understanding, we first describe the construction for the balanced family of control
widths
\begin{equation}\nonumber
n=2^h-1,
\qquad
h\geq2.
\label{eq}
\end{equation}
Here,
$n$
denotes the number of binary-encoded control qutrits.
The main idea is to determine whether all controls are in the state
$\ket{1}$
through a balanced tree of intermediate computations.
Instead of checking the controls sequentially, the circuit first
evaluates several small groups of controls independently.
The results from these groups are then combined level by level.
At the root of the tree, a single qutrit reaches the temporary 
state
$\ket{2}$
if and only if every logical control is active.

The construction uses two types of building blocks.
A leaf block checks three binary controls and stores the result
temporarily in one of the control qutrits.
An internal merge block combines the results from two previously
evaluated subtrees with one additional control.
Each internal merge uses one clean ancillary qutrit initialized in
$\ket{0}$.
After the target operation has been applied, the circuit is reversed
to restore all controls and ancillary qutrits.

A leaf block acts on three binary controls
$u,v,w$:
\begin{equation}\nonumber
L(u,v,w)
=
\Csel_{v,w}
\,
\SUM_{u,v}.
\label{eq:leaf-block}
\end{equation}
Operators are written in right-to-left order.
Therefore, the SUM gate acts first, followed by the
state-selective controlled increment. The purpose of the leaf block is to determine whether all three
controls are active.

\begin{lemma}
For binary inputs,
\begin{equation}
w_{\mathrm{out}}=2
\quad\Longleftrightarrow\quad
u=v=w=1.
\label{eq_leaf}
\end{equation}
\end{lemma}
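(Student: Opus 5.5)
The argument is a direct computation on computational basis states. Because every operator in the leaf block $L(u,v,w)=\Csel_{v,w}\,\SUM_{u,v}$ sends basis states to basis states, it is enough to follow a single classical input $(u,v,w)\in\{0,1\}^3$ through the two gates in the order they are applied. The plan is to record the intermediate value of $v$ after $\SUM_{u,v}$, then work out the output value of $w$ after $\Csel_{v,w}$.

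\emph{Step 1 (effect of SUM).} By the definition of $\SUM$, the qutrit $v$ is replaced by $v'=u+v \pmod 3$, and $u$ is left unchanged. For binary $u,v$ we have $u+v\in\{0,1,2\}$, so no reduction modulo $3$ ever takes place. Consequently $v'=2$ holds exactly when $u=v=1$, which is the marker identity $\SUM\ket{1,1}=\ket{1,2}$ noted in Section~II.

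\emph{Step 2 (effect of the selective increment).} By the definition of $C_\ell(\INC)$ with $\ell=2$, the output is $w_{\mathrm{out}}=w+\delta_{v',2}\pmod 3$, and the control $v'$ is unchanged. There are two cases.
\begin{itemize}
\item If $v'\neq 2$, then $w_{\mathrm{out}}=w\in\{0,1\}$, so $w_{\mathrm{out}}\neq 2$.
\item If $v'=2$, then $w_{\mathrm{out}}=w+1$, which equals $2$ if and only if $w=1$.
\end{itemize}
In the second case the reduction modulo $3$ does no harm, because $w\le 1$ gives $w+1\le 2$.

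\emph{Step 3 (combine).} Putting the two steps together, $w_{\mathrm{out}}=2$ holds if and only if both $v'=2$ and $w=1$. By Step 1, $v'=2$ is equivalent to $u=v=1$, so the condition becomes $u=v=w=1$. This proves \eqref{eq_leaf}. As a side note, exhaustive verification over all eight binary inputs is an equally valid alternative proof.

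The only step needing any care is confirming that the additions modulo $3$ never wrap around on binary inputs. The first sum stays at most $2$, and in the second case $w+1$ stays at most $2$ because $w\le 1$. Since both bounds hold, wrap-around cannot produce a spurious value of $2$, and no other obstacle arises.
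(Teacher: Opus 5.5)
Your proposal is correct and follows essentially the same route as the paper: $\SUM_{u,v}$ promotes $v$ to $\ket{2}$ exactly when $u=v=1$, and $\Csel_{v,w}$ then raises the binary $w$ from $\ket{1}$ to $\ket{2}$ exactly when $v=2$ and $w=1$. Your explicit check that the mod-$3$ additions never wrap around is a small detail the paper leaves implicit in the main text; its supplement covers the same point through the eight-row truth table.
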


\begin{proof}
The SUM gate promotes
$v$
to
$\ket{2}$
if and only if
$u=v=1$.
The state-selective gate
$\Csel_{v,w}$
then increments
$w$
if and only if
$v=2$.
Because
$w$
is initially restricted to the binary subspace,
it changes from
$\ket{1}$
to
$\ket{2}$
if and only if all three inputs are equal to one.
Thus,
$w_{\mathrm{out}}=2$
if and only if
$u=v=w=1$.
\end{proof}

The qutrit
$w$
therefore acts as the temporary output marker of the leaf block.
It remains available as an input to the next level of the tree.

\paragraph{Internal merge blocks.}

We next describe how two previously computed subtree markers are
combined with one additional binary control. First, define:
\begin{equation}\nonumber
M_v(x,m,y;a_v)
=
\Csel_{a_v,m}
\,
\Csel_{y,a_v}
\,
\Csel_{x,a_v}.
\label{eq:merge-block}
\end{equation}

Let
$x$
and
$y$
denote the marker qutrits of two child subtrees. $a_v$ is an ancillary qutrit and $m$ denotes the binary middle control.


\begin{lemma}
If
$a_v$
is initialized in
$\ket{0}$,
then
\begin{equation}
m_{\mathrm{out}}=2
\quad\Longleftrightarrow\quad
x=y=2
\ \text{and}\
 m_{\mathrm{in}}=1.
\label{eq_marker}
\end{equation}
\end{lemma}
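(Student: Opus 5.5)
The plan is to track the ancilla $a_v$ and the middle control $m$ through the three gates of $M_v$ in the order they act (right to left): first $\Csel_{x,a_v}$, then $\Csel_{y,a_v}$, then $\Csel_{a_v,m}$. I will use only the defining action $\Csel\ket{c,t}=\ket{c,\,t+\delta_{c,2}\pmod 3}$. The key structural fact is that the control qutrit of a $\Csel$ gate is left unchanged. So $x$ and $y$ keep their input values throughout. This holds whatever those values are in $\{0,1,2\}$, which matters because, by the previous lemma, a child marker may be in $\ket{2}$ or in a binary state.

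\emph{Step 1 (ancilla accumulation).} Starting from $a_v=0$, the first gate sets $a_v=\delta_{x,2}$. The second gate adds $\delta_{y,2}$, so afterwards
\begin{equation}\nonumber
a_v=\delta_{x,2}+\delta_{y,2}\pmod 3 .
\end{equation}
Because the sum lies in $\{0,1,2\}$, no modular wraparound occurs, and the sum counts exactly how many child markers are active. Hence $a_v=2$ if and only if $x=y=2$. This is the one point that needs care. It is where the clean initialization $\ket{0}$ is essential: an ancilla starting at $1$ or $2$ could reach $2$ with fewer active children, or wrap past it. It is also why two selective increments are used rather than a single SUM.

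\emph{Step 2 (propagation to $m$).} The third gate increments $m$ exactly when $a_v=2$, so $m_{\mathrm{out}}=m_{\mathrm{in}}+\delta_{a_v,2}\pmod 3$. Since $m_{\mathrm{in}}\in\{0,1\}$ is binary, I will check the four cases. If there is no increment, then $m_{\mathrm{out}}=m_{\mathrm{in}}\le 1$. If there is an increment, then $m_{\mathrm{out}}=m_{\mathrm{in}}+1$, which equals $2$ precisely when $m_{\mathrm{in}}=1$. The argument is the same as in the proof of the leaf lemma.

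Combining the two steps gives $m_{\mathrm{out}}=2$ if and only if $x=y=2$ and $m_{\mathrm{in}}=1$. No step is a genuine obstacle. The only subtle point is the no-wraparound observation in Step 1, together with the reliance on $a_v$ starting in $\ket{0}$ and on $m$ starting in the binary subspace.
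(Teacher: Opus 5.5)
Your proposal is correct and follows essentially the same route as the paper's proof: the two selective increments accumulate $\delta_{x,2}+\delta_{y,2}$ in the clean ancilla, so $a_v=2$ exactly when both child markers are $\ket{2}$, and the final $\Csel_{a_v,m}$ lifts the binary $m$ to $\ket{2}$ exactly when in addition $m_{\mathrm{in}}=1$. Your explicit no-wraparound check and your note that the controls $x,y$ are left unchanged are slightly more careful than the paper's wording, but they do not change the argument.
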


\begin{proof}
The first two chronological operations increment
$a_v$
once for each active child marker.
Starting from
$a_v=0$,
the ancillary qutrit reaches
$a_v=2$
if and only if
$x=y=2$.
The final state-selective operation increments the binary middle
control if and only if
$a_v=2$.
Because
$m$
is initially binary encoded,
it reaches
$\ket{2}$
if and only if
$x=y=2$
and
$m_{\mathrm{in}}=1$.
\end{proof}

After the merge block has been applied,
$m$
acts as the temporary marker of the larger subtree.
This marker can then be used as an input to the next merge level.
The ancillary qutrit
$a_v$
is not immediately reset.
Instead, its cleanup is deferred until after the target toggle.
The inverse traversal restores
$a_v$
to
$\ket{0}$.
This delayed cleanup is essential for obtaining the low
$P_9$
count of the proposed construction.

Figure~\ref{fig} illustrates the overall
decomposition structure.
The left side of the figure represents the forward tree computation.
Independent leaf blocks can be executed in parallel because they act
on disjoint groups of controls.
At the next levels, independent internal merges can also be executed
in parallel whenever they belong to separate branches.
The strict target toggle is applied only once, after the complete
all controls satisfied condition reaches the root.
The right side of the figure represents the inverse traversal, which
removes all temporary markers and restores every ancillary qutrit and make all the output binary.
Explicit circuit realizations of this construction are
provided in the Supplemental Material~\cite{SupplementalMaterial}
for seven- and fifteen-control MCT gates.
These examples illustrate the parallel evaluation of the leaf blocks,
the propagation of intermediate
$\ket{2}$
markers through the merge levels, the unique strict toggle at the
root, and the inverse traversal that restores all temporary states.

\begin{figure*}[t]
\centering
\resizebox{0.98\textwidth}{!}{%
\begin{tikzpicture}[x=1cm,y=1cm,>=latex]

\tikzset{
  leaf/.style={
    draw,
    rounded corners,
    align=center,
    minimum width=2.15cm,
    minimum height=0.86cm,
    font=\small
  },
  merge/.style={
    draw,
    rounded corners,
    align=center,
    minimum width=2.20cm,
    minimum height=0.92cm,
    font=\small
  },
  roottoggle/.style={
    draw,
    rounded corners,
    align=center,
    minimum width=2.10cm,
    minimum height=0.88cm,
    font=\small
  },
  anclabel/.style={
    font=\small,
    align=center
  },
  flow/.style={
    ->,
    thick
  }
}

\node[leaf] (L1) at (-6.8, 2.4)
{$L_{1}$\\
 $\SUM$; $C_{2}(\INC)$};

\node[leaf] (L2) at (-6.8, 0.8)
{$L_{2}$\\
 $\SUM$; $C_{2}(\INC)$};

\node[leaf] (L3) at (-6.8,-0.8)
{$L_{3}$\\
 $\SUM$; $C_{2}(\INC)$};

\node[leaf] (L4) at (-6.8,-2.4)
{$L_{4}$\\
 $\SUM$; $C_{2}(\INC)$};

\node[merge] (M1) at (-3.8, 1.6)
{$M_{1}$\\
 $3\,C_{2}(\INC)$};

\node[merge] (M2) at (-3.8,-1.6)
{$M_{2}$\\
 $3\,C_{2}(\INC)$};

\node[merge] (MR) at (-0.9,0)
{$M_{\mathrm{root}}$\\
 $3\,C_{2}(\INC)$};

\node[roottoggle] (T) at (1.7,0)
{$C_{2}(X_{01})$\\
 strict root toggle};

\node[merge] (MRi) at (4.3,0)
{$M_{\mathrm{root}}^{\dagger}$};

\node[merge] (M1i) at (7.1, 1.6)
{$M_{1}^{\dagger}$};

\node[merge] (M2i) at (7.1,-1.6)
{$M_{2}^{\dagger}$};

\node[leaf] (L1i) at (10.0, 2.4)
{$L_{1}^{\dagger}$};

\node[leaf] (L2i) at (10.0, 0.8)
{$L_{2}^{\dagger}$};

\node[leaf] (L3i) at (10.0,-0.8)
{$L_{3}^{\dagger}$};

\node[leaf] (L4i) at (10.0,-2.4)
{$L_{4}^{\dagger}$};

\draw[flow] (L1.east) -- (M1.west);
\draw[flow] (L2.east) -- (M1.west);

\draw[flow] (L3.east) -- (M2.west);
\draw[flow] (L4.east) -- (M2.west);

\draw[flow] (M1.east) -- (MR.west);
\draw[flow] (M2.east) -- (MR.west);

\draw[flow] (MR.east) -- (T.west);

\draw[flow] (T.east) -- (MRi.west);

\draw[flow] (MRi.east) -- (M1i.west);
\draw[flow] (MRi.east) -- (M2i.west);

\draw[flow] (M1i.east) -- (L1i.west);
\draw[flow] (M1i.east) -- (L2i.west);

\draw[flow] (M2i.east) -- (L3i.west);
\draw[flow] (M2i.east) -- (L4i.west);

\node[anclabel] at (-3.8, 2.45)
{$\ket{0}_{a_{0}}$};

\node[anclabel] at (-3.8,-2.45)
{$\ket{0}_{a_{1}}$};

\node[anclabel] at (-0.9,0.92)
{$\ket{0}_{a_{2}}$};

\node[font=\small] at (-6.8,3.15)
{leaf predicates};

\node[font=\small] at (-2.35,2.95)
{forward predicate propagation};

\node[font=\small] at (7.1,2.95)
{inverse restoration};

\node[font=\small] at (1.7,-0.95)
{target flips iff root marker is $\ket{2}$};

\node[font=\large] at (-6.8,-3.30) {$\vdots$};
\node[font=\large] at (10.0,-3.30) {$\vdots$};

\end{tikzpicture}
}
\caption{
Balanced qubit-qutrit predicate tree.
Each leaf block $L_{j}$ uses a ternary Clifford SUM gate and one
state-selective $C_{2}(\mathrm{INC})$ operation to generate a
temporary subtree marker.
Each internal merge $M_{j}$ uses one branch-local clean ancillary
qutrit and three state-selective increments to combine two child
markers with the corresponding middle control.
After the complete all-controls-satisfied predicate reaches the root,
the unique strict toggle $C_{2}(X_{01})$ acts on the target.
The inverse tree restores all temporary marker states and returns
every ancillary qutrit to $\ket{0}$.
}
\label{fig}

\end{figure*}

\paragraph{Correctness.}

For every control subtree
$S$,
define
\begin{equation}\nonumber
\chi(S)
=
\bigwedge_{j\in S}c_j.
\label{eq:subtree-predicate}
\end{equation}
The quantity
$\chi(S)$
equals one if and only if all logical controls contained in the
subtree
$S$
are active.

The forward tree is denoted by
$W_n$.
Its purpose is to compute the root marker.
The root marker reaches
$\ket{2}$
if and only if
$\chi(S)=1$
for the full control set.
The strict toggle then flips the binary target on this branch.
Finally,
$W_n^\dagger$
reverses the temporary computation.

\begin{theorem}
For
$n=2^h-1$
controls, the circuit
\begin{equation}\nonumber
U_{\MCT_n}
=
W_n^{\dagger}
\,
\Cstrict
\,
W_n
\label{eq:factorization}
\end{equation}
implements
\begin{equation}\nonumber
\begin{split}
&
\ket{c_1\cdots c_n}
\ket{t}
\ket{\bm{0}}_{\mathrm{anc}}
\\
&\quad\longmapsto
\ket{c_1\cdots c_n}
\ket{
t\oplus
(c_1\land\cdots\land c_n)
}
\ket{\bm{0}}_{\mathrm{anc}}
\end{split}
\label{eq:mct-action}
\end{equation}
on binary-subspace inputs.
\end{theorem}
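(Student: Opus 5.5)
The plan is to exploit the fact that $W_n$ is a permutation of computational basis states, so it suffices to track a single basis input $\ket{c_1\cdots c_n}\ket{t}\ket{\bm 0}_{\mathrm{anc}}$ with all $c_j,t\in\{0,1\}$ and extend by linearity. The argument has three steps: (i) show by induction on the height of the tree that $W_n$ drives the root marker to $\ket{2}$ exactly when $\chi$ of the full control set equals one; (ii) check that $\Cstrict$ acts on the target as $X_{01}^{\chi}$ without changing any other qutrit; (iii) show that $W_n^\dagger$ undoes everything $W_n$ did to the controls and ancillas, while leaving the target untouched.

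For step (i), we induct over the tree levels $k=1,\dots,h-1$. The claim at each level is that every subtree $S$ at that level has its marker qutrit equal to $\ket{2}$ iff $\chi(S)=1$.

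The base case is the leaf blocks, which are handled by the leaf-block lemma (Eq.~\eqref{eq_leaf}). For the inductive step at a merge $M_v(x,m,y;a_v)$, the merge lemma (Eq.~\eqref{eq_marker}) applies once its hypotheses are verified. Four things must hold:
\begin{itemize}
\item $a_v$ is fresh and equal to $\ket{0}$, since each merge has its own ancilla;
\item $m$ is still binary, since middle controls are untouched by lower levels;
\item $x$ and $y$ are the child markers, each with value in $\{0,1,2\}$;
\item $x=y=2$ iff $\chi(S_x)=\chi(S_y)=1$.
\end{itemize}
The one point needing care is that the merge lemma as stated assumes $x,y$ take values whose "$=2$" test is exactly the child predicate. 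Here non-marker values may be $0$ or $1$, and $C_2(\INC)$ ignores both, so $a_v$ is incremented once per active child and never wraps modulo $3$. Since $\chi(S)=\chi(S_x)\wedge m\wedge\chi(S_y)$, this closes the induction. I would also note the index bookkeeping: the tree has $(n+1)/4$ leaves covering $3(n+1)/4$ controls, and $(n-3)/4$ merges each consuming one middle control, for a total of $n$.

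For step (ii), $\Cstrict$ fires only on root value $2$ and flips $t\in\{0,1\}$ within the binary subspace. It is diagonal on the control register, so it leaves the root marker and every other qutrit unchanged.

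For step (iii), the main obstacle is that $W_n^\dagger$ is applied after the target has changed, so I must argue that $W_n$ never reads or writes the target. This holds because the target appears in no leaf or merge gate. With that, the state before $W_n^\dagger$ is $W_n\ket{c}\ket{\bm0}\otimes\ket{t\oplus\chi}$, and $W_n^\dagger W_n=I$ returns all controls to $\ket{c}$ and all ancillas to $\ket{\bm 0}$. The output is therefore binary, and the resulting map is the claimed action of $\MCT_n$.
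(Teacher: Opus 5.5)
Your proof is correct and follows the paper's own argument. You induct on tree height with the leaf and merge lemmas to show the root marker is $\ket{2}$ exactly when every control is one, apply the single strict toggle, and let $W_n^\dagger$ undo the forward computation. Your extra checks make explicit several points the paper leaves implicit: the merge lemma depends only on $\delta_{x,2}$ and $\delta_{y,2}$, so $a_v$ never wraps; the middle control is still binary when its merge is applied; and $W_n$ never touches the target, so $W_n^\dagger W_n$ restores controls and ancillas.
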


\begin{proof}
We prove by induction on the tree height that the marker of every
subtree
$S$
satisfies
\begin{equation}\nonumber
m(S)=2
\quad\Longleftrightarrow\quad
\chi(S)=1.
\label{eq:ms}
\end{equation}

For a three-control leaf, the claim follows directly from
Eq.~(\ref{eq_leaf}).
The output marker of the leaf reaches
$\ket{2}$
if and only if all three controls in that leaf are active.

Now assume that the statement holds for the two child subtrees of an
internal node.
By the induction hypothesis, the two child markers reach
$\ket{2}$
if and only if all controls in their corresponding subtrees are
active.
According to
Eq.~(\ref{eq_marker}),
the parent marker reaches
$\ket{2}$
if and only if both child markers equal
$\ket{2}$
and the middle control equals one.
This condition is exactly the conjunction of all logical controls
contained in the parent subtree.

By induction, the marker at the root of the tree equals
$\ket{2}$
if and only if every logical control equals one.
The gate
$\Cstrict$
therefore flips the binary target exactly on the all-ones branch.
Finally,
$W_n^\dagger$
reverses the forward computation, restoring every temporary marker and
returning each ancillary qutrit to
$\ket{0}$.
\end{proof}

The Supplemental Material~\cite{SupplementalMaterial} verifies the
local action of the leaf and merge primitives explicitly through
truth tables, providing a direct basis-state interpretation of the
temporary
$\ket{2}$
markers used in the construction.

\paragraph{Exact Clifford+$P_9$ compilation.}
We now evaluate the exact resource cost of the proposed decomposition. The analysis is carried out in the generic ternary Clifford+$P_9$ framework of~\cite{Bocharov2017}. The total $P_9$ count is therefore determined entirely by the number of state-selective controlled increments and by the single strict toggle applied to the target at the root of the tree.

The balanced tree contains
\begin{equation}\nonumber
L(n)
=
\frac{n+1}{4}
\label{eq:leaf-count}
\end{equation}
leaf blocks and
\begin{equation}
I(n)
=
\frac{n-3}{4}
\label{eq:merge-count}
\end{equation}
internal merges.

The forward tree therefore contains
\begin{align}\nonumber
N_{{\Csel}}^{\mathrm{forward}}(n)
&=
L(n)+3I(n)
\\
&=
n-2\nonumber
\label{eq:forward-count}
\end{align}
state-selective gates.
The inverse tree contributes the same number.

\begin{proposition}
For
$n=2^h-1$
controls, the branch-local tree requires
\begin{equation}\nonumber
\boxed{
N_{P_9}^{\mathrm{tree}}(n)
=
6n+3
}
\label{eq:tree-count}
\end{equation}
logical
$P_9$
injections and
\begin{equation}\nonumber
\boxed{
N_{\mathrm{anc}}^{\mathrm{tree}}(n)
=
\frac{n-3}{4}
}
\label{eq:tree-ancilla}
\end{equation}
clean ancillary qutrits.
\end{proposition}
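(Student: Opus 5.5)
The proof is a counting argument on the balanced tree that realizes $W_n$, combined with the per-gate costs fixed in the Background section. I will first establish the structure of the tree, then count the non-Clifford gates in $U_{\MCT_n}=W_n^\dagger\,\Cstrict\,W_n$, and finally count the ancillas.

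\emph{Tree structure.} For $n=2^h-1$, the tree is a full binary tree. Its $L$ leaves each consume $3$ controls, and its $I$ internal merges each consume one middle control and one fresh ancilla $a_v$. A full binary tree with $L$ leaves has $I=L-1$ internal nodes. Every control is used exactly once, so
\begin{equation}\nonumber
3L+I=n.
\end{equation}
Solving these two relations gives $L=(n+1)/4$ and $I=(n-3)/4$. Both are integers, since $n+1=2^h$ with $h\ge2$. This justifies Eq.~(\ref{eq:merge-count}) and the leaf count stated just before it. I would check it against $h=2$ (one leaf, no merge, $n=3$) and $h=3$ ($n=7$, two leaves, one merge).

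\emph{$P_9$ count.} Each leaf block $L(u,v,w)$ contains one SUM gate, which costs $0$, and one $\Csel$, which costs $3$. Each merge block $M_v$ contains three $\Csel$ gates. The forward tree $W_n$ therefore contains
\begin{equation}\nonumber
L+3I=\frac{n+1}{4}+\frac{3(n-3)}{4}=n-2
\end{equation}
copies of $\Csel$, each costing $3$. The inverse $W_n^\dagger$ consists of the same number of $\Csel^{-1}$ gates and $\SUM^\dagger$ gates, and these have the same costs by the stated cost table. The single strict toggle $\Cstrict$ costs $15$. The total is
\begin{equation}\nonumber
2\cdot 3(n-2)+15=6n+3.
\end{equation}
Clifford gates contribute nothing, and by the construction of the theorem no other gates occur.

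\emph{Ancilla count.} Leaves use no ancillas, since their markers are stored in a control qutrit $w$. Each internal merge uses exactly one clean qutrit $a_v$. These ancillas are not reused, because cleanup is deferred until $W_n^\dagger$. This is the branch-local version, as opposed to the reuse variant mentioned in the introduction. The ancilla count is therefore $I=(n-3)/4$, and every ancilla returns to $\ket{0}$ by the correctness theorem.

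The only step needing any care is the structural claim $3L+I=n$ with $I=L-1$. The leaves and middle controls must partition the controls exactly. I would argue this by induction on $h$: the tree of height $h$ consists of two subtrees of height $h-1$ plus one middle control and one merge, so
\begin{equation}\nonumber
n_h=2n_{h-1}+1,\qquad n_2=3,
\end{equation}
which gives $n_h=2^h-1$. The leaf and merge counts then satisfy
\begin{equation}\nonumber
L_h=2L_{h-1},\qquad I_h=2I_{h-1}+1,
\end{equation}
with $L_2=1$ and $I_2=0$. These recursions are solved directly by the closed forms above.
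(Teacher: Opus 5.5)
Your proposal is correct and follows the paper's argument: the paper likewise counts $L(n)+3I(n)=n-2$ state-selective increments in each half, totals $3(2n-4)+15=6n+3$, and assigns one ancilla per internal merge, with $L(n)=2^{h-2}$ and $I(n)=L(n)-1$ derived in the Supplemental Material. Your additional justification via $3L+I=n$ and $I=L-1$ is a clean way to fix these counts, and it incidentally shows that they depend only on the tree being full binary with three-control leaves, not on its being balanced.
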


\begin{proof}
The complete circuit contains
$2n-4$
state-selective gates, each costing
$3P_9$,
and one strict root toggle costing
$15P_9$.
SUM gates are Clifford.
Therefore,
\begin{align}\nonumber
N_{P_9}^{\mathrm{tree}}(n)
&=
3(2n-4)+15
\\
&=
6n+3.\nonumber
\end{align}
Each internal merge uses one ancillary qutrit, hence, we need a total number of ancillary qutrits as per Eq. (\ref{eq:merge-count}) for the decomposition.

A step-by-step derivation of the leaf count, internal-merge count, and
total number of state-selective operations is given in the
Supplemental Material~\cite{SupplementalMaterial}.

\end{proof}

The balanced-tree structure allows several operations to be executed in
parallel.
At a fixed level of the tree, nodes belonging to different branches
act on separate sets of qutrits.
Since these nodes do not share any control, marker, or ancillary
qutrits, their corresponding operations can be scheduled
simultaneously.

The circuit is therefore evaluated level by level.
All nodes at the lowest level are processed in parallel, followed by
the nodes at the next level, and so on until the root marker is
obtained.
After the target toggle, the inverse traversal proceeds in the reverse
order with the same level-wise parallelism. The depth therefore scales as

\begin{equation}\nonumber
D_{\mathrm{tree}}(n)
=
\Theta(\log_2 n).
\label{eq:tree-depth}
\end{equation}

The Supplemental Material~\cite{SupplementalMaterial} further derives an explicit depth upper bound.

\paragraph{Constructive width cost frontier.}

The construction described above assigns one clean
ancillary qutrit to each internal merge.
These ancillas remain available until the global inverse traversal is
performed.
This strategy minimizes the
$P_9$
count and allows independent branches of the tree to be evaluated in
parallel.

A different implementation strategy is possible when fewer clean
ancillary qutrits are available.
Instead of retaining all merge ancillas until the end of the forward
traversal, some of them can be cleaned and reused at later stages of
the computation.
This reduces the required ancillary workspace, but introduces
additional state-selective operations.

A lower-width schedule may reuse merge ancillas before the global
inverse traversal.
Let
\begin{equation}\nonumber
1
\leq
A
\leq
I(n)
=
\frac{n-3}{4}
\label{eq:A-range}
\end{equation}
be the maximum number of simultaneously live merge ancillas.

Each internal node whose ancilla is reused requires two immediate
cleanup operations in the forward traversal and two corresponding
operations in the inverse traversal.
Each removed retained ancilla therefore adds
$12P_9$.

\begin{proposition}
Within this predicate-tree family, the achievable
width cost frontier is
\begin{equation}
\boxed{
N_{P_9}^{\mathrm{tree}}(n,A)
=
9n-6-12A.
}
\label{eq:reuse-count}
\end{equation}
\end{proposition}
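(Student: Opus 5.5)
The count follows from the previous Proposition together with a per-node accounting of the reuse modification. Start from the retained-ancilla schedule of the previous Proposition, which uses $I(n)=\frac{n-3}{4}$ live merge ancillas and $6n+3$ $P_9$ injections. Let $r=I(n)-A$ be the number of internal merges whose ancilla is cleaned immediately and reused rather than retained until the global inverse traversal. The claim then reduces to three facts: each reused node costs exactly $12P_9$ extra; no other gate counts change; and $A$ simultaneously live ancillas can be realized for every $1\le A\le I(n)$.

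For the per-node cost, consider a merge $M_v(x,m,y;a_v)=\Csel_{a_v,m}\Csel_{y,a_v}\Csel_{x,a_v}$. After the final gate has written the marker into $m$, the ancilla $a_v$ still holds the count of active child markers. Applying $\Csel_{y,a_v}^{-1}\Csel_{x,a_v}^{-1}$ returns $a_v$ to $\ket{0}$, because $x$ and $y$ are untouched by $M_v$. The marker on $m$ is unaffected, so Lemma~2 and the induction in Theorem~1 still hold verbatim. This cleanup costs $2\times 3=6$ $P_9$ injections in the forward traversal.

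In the inverse traversal, $M_v^\dagger$ needs $a_v$ in the state it held at that point of the forward pass. The inverse must therefore first recompute $a_v$ with $\Csel_{x,a_v}\Csel_{y,a_v}$, which is two more gates, before undoing $\Csel_{a_v,m}$ and the original increments. This adds another $6P_9$, giving $12P_9$ per reused node by the stated cost $N_{P_9}[\Csel^{\pm1}]=3$. Hence
\[
N_{P_9}^{\mathrm{tree}}(n,A)=6n+3+12\left(\tfrac{n-3}{4}-A\right)=6n+3+3n-9-12A=9n-6-12A .
\]
At $A=I(n)$ this recovers the previous Proposition, which serves as a consistency check.

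The main obstacle is realizability: showing that choosing $r$ nodes to clean immediately actually brings the peak number of simultaneously live ancillas down to exactly $A$, so that every $A$ in the range is attained. I would process the tree in postorder and retain only the ancillas of the nodes on a chosen set of $A$ internal nodes, such as the root and a root-ward chain or prefix, cleaning all the others immediately. A cleaned ancilla is free once its node finishes, so it can be handed to the next merge. The peak live count is then the $A$ retained ancillas plus at most one transient ancilla, and that transient is itself a freed ancilla being recycled, which needs care. To handle this, I would count the retained ones as the live set and treat the transient workspace as drawn from them, or else argue that the transient coincides with a retained slot. I would also note that parallelism is partially sacrificed, since depth is not part of the claim. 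Because the statement speaks of the frontier "within this predicate-tree family," I would prove only that the formula is achieved by this schedule and not that it is optimal.
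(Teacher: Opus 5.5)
Your proposal is correct and follows the paper's route. You start from the $6n+3$ branch-local schedule and charge each of the $I(n)-A$ reused ancillas two forward cleanup gates $\Csel_{y,a}^{\dagger}\Csel_{x,a}^{\dagger}$ plus two recomputation gates in the inverse, i.e.\ $12P_9$ per ancilla, then simplify. Your realizability worry resolves along your second option, a point the paper itself never addresses: the root ancilla is retained, allocated last in the forward pass and released first in the inverse, so every transient ancilla can occupy the root's slot while it is unused and the peak is exactly $A$, not $A+1$.
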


\begin{proof}
The branch-local implementation uses
$I(n)$
retained ancillary qutrits and has cost
$6n+3$.
If only
$A$
ancillas are retained simultaneously, then
$I(n)-A$
ancillas must be cleaned and reused.
Each reused ancilla adds
$12P_9$.
Therefore,
\begin{align}\nonumber
N_{P_9}^{\mathrm{tree}}(n,A)
&=
6n+3
+
12\left[I(n)-A\right]
\nonumber\ \\
&=
6n+3
+
12\left[
\frac{n-3}{4}-A
\right]
\nonumber\  \\
&=
9n-6-12A.\nonumber
\end{align}

The Supplemental Material~\cite{SupplementalMaterial} gives the corresponding cleanup schedule explicitly and shows why each removed retained ancilla adds four state-selective gates, or $12P_9$.
\end{proof}

Equation~(\ref{eq:reuse-count}) describes an explicit tradeoff between
ancillary workspace and non-Clifford cost.
Using more simultaneously live ancillary qutrits reduces the number of
cleanup operations and lowers the
$P_9$
count.
Using fewer ancillary qutrits reduces the required workspace but
increases the number of non-Clifford operations. At the low-width endpoint,
\begin{equation}\nonumber
A=1,
\end{equation}
so
\begin{equation}
\boxed{
N_{P_9}^{\mathrm{reuse}}(n)
=
9n-18.
}
\label{eq:}
\end{equation}

A depth-first one-ancilla schedule has
\begin{equation}\nonumber
D_{\mathrm{reuse}}(n)
=
\Theta(n).
\label{eq:reuse-depth}
\end{equation}
More generally, batching merge operations with
$A$
available ancillas give
\begin{equation}\nonumber
D(n,A)
=
O
\left(
\log n+\frac{n}{A}
\right).
\label{eq:depth-frontier}
\end{equation}

The frontier is constructive and model-specific.
It is not a claim of global optimality over all possible
Clifford+$P_9$ synthesis.

\paragraph{Generic control widths.}

The balanced construction gives simple closed-form resource counts and
maximal parallelism when the number of controls satisfies
$n=2^h-1$.
However, practical circuits may contain an arbitrary number of
controls.
We therefore extend the construction to general values of
$n$.
For arbitrary
$n$,
choose a balanced core width
\begin{equation}\nonumber
m
=
2^h-1
\leq n.
\label{eq:core-width}
\end{equation}
The balanced tree is first used to compute a root marker for these
$m$
controls.
The remaining
$n-m$
binary controls are then incorporated sequentially.
Each appended control requires one forward
$\Csel$
gate to update the marker and one corresponding inverse
$\Csel$
gate during uncomputation.
The appended-control identity and its inverse are verified in the
Supplemental Material.

Each appended control adds
$6P_9$.
Consequently,
\begin{equation}\nonumber
\boxed{
N_{P_9}^{\mathrm{tree}}(n;m)
=
6m+3+6(n-m)
=
6n+3.
}
\label{eq:generic-count}
\end{equation}

The ancillary requirement is
\begin{equation}\nonumber
N_{\mathrm{anc}}(n;m)
=
\frac{m-3}{4},
\label{eq:generic-ancilla}
\end{equation}
and, for this sequential extension, the depth is
\begin{equation}\nonumber
D(n;m)
=
O
\left(
\log m+n-m
\right).
\label{eq:generic-depth}
\end{equation}
The first term is the depth of the balanced core, while the second term
comes from the sequential incorporation and subsequent uncomputation of the
remaining $n-m$ controls. If $m$ is chosen as the largest balanced width not
exceeding $n$,
\begin{equation}\nonumber
m
=
2^{\lfloor \log_2(n+1)\rfloor}-1,
\end{equation}
then $0\leq n-m\leq m$. Hence the appended part can contain a linear
fraction of all controls. For example, when
\begin{equation}\nonumber
n=2^{h+1}-2,
\qquad
m=2^h-1,
\end{equation}
one has $n-m=m=\Theta(n)$, and therefore
\begin{equation}\nonumber
D(n;m)=\Theta(n).
\end{equation}
Thus, the logarithmic-depth scaling is guaranteed for the balanced family
$n=2^h-1$ (and more generally whenever $n-m=O(\log n)$), whereas this
particular arbitrary-width extension has linear depth in the worst case over
$n$.

\subsection{Comparative Study}

We now compare the proposed tree decomposition with two exact
recursive baselines derived from~\cite{Bocharov2017}.
Both baselines incorporate additional controls sequentially.
The proposed construction instead evaluates independent
groups of controls in parallel and combines their results level by
level through a balanced tree.

\paragraph{Baseline A.}
Start from the one-clean-ancilla
$12P_9$
Toffoli of \cite{Bocharov2017} repeatedly:
\begin{equation}\nonumber
N_{P_9}^{(A)}(n)
=
6n,
\qquad
N_{\mathrm{anc}}^{(A)}(n)
=
n-1.
\label{eq:baseline-a}
\end{equation}

\paragraph{Baseline B.}
Start from the ancilla-free
$15P_9$
Toffoli of \cite{Bocharov2017} repeatedly:
\begin{equation}\nonumber
N_{P_9}^{(B)}(n)
=
6n+3,
\qquad
N_{\mathrm{anc}}^{(B)}(n)
=
n-2.
\label{eq:baseline-b}
\end{equation}

Both recursive baselines have sequentially nested depth:
\begin{equation}\nonumber
D^{(A)}(n)
=
D^{(B)}(n)
=
\Theta(n).
\label{eq:baseline-depth}
\end{equation}

However, the proposed construction reduces the clean ancillary requirement and, at the same time, it replaces the linear nesting depth of the recursive baseline with logarithmic depth. For balanced widths, our tree construction matches Baseline~B in
exact
$P_9$
count while reducing the clean ancillary count from
$n-2$
to
$(n-3)/4$
and replacing linear nesting depth with logarithmic tree depth. Thus, the proposed tree construction uses asymptotically one quarter
as many clean ancillary qutrits as Baseline~B while retaining the same
exact
$P_9$
count,
\begin{equation}\nonumber
\lim_{n\rightarrow\infty}
\frac{
N_{\mathrm{anc}}^{\mathrm{tree}}(n)
}{
N_{\mathrm{anc}}^{(B)}(n)
}
=
\frac{1}{4}.
\label{eq:ancilla-ratio}
\end{equation}
Table~\ref{tab:comparison}
summarizes the resource requirements of the two recursive baselines
and the proposed tree-based constructions. The derivations of Baseline~A and Baseline~B are provided in the Supplemental Material~\cite{SupplementalMaterial}. For completeness, the Supplemental Material additionally compares the proposed ternary construction with representative binary-only fault-tolerant construction under explicitly stated conditional assumptions.
\begin{table}[!htb]
\caption{
Exact Clifford+$P_9$ comparison for balanced
$n=2^h-1$
control widths.
}
\label{tab:comparison}
\resizebox{\columnwidth}{!}{%
\begin{tabular}{lccc}
\toprule
Method
&
$P_9$ count
&
Clean ancillas
&
Depth
\\
\midrule
Baseline A \cite{Bocharov2017}
&
$6n$
&
$n-1$
&
$\Theta(n)$
\\
Baseline B \cite{Bocharov2017}
&
$6n+3$
&
$n-2$
&
$\Theta(n)$
\\
Tree, branch-local [This paper] 
&
$6n+3$
&
$(n-3)/4$
&
$\Theta(\log n)$
\\
Tree, one reused ancilla [This paper] 
&
$9n-18$
&
$1$
&
$\Theta(n)$
\\
\bottomrule
\end{tabular}
}
\end{table}

\section{Conclusion}

We introduced an intermediate-qutrit decomposition of
multi-controlled Toffoli gates with binary-subspace inputs and outputs in the ternary Clifford+$P_9$  framework. The construction evaluates the all-controls-satisfied condition through a balanced tree. We exhibited that the proposed construction uses asymptotically one quarter as many clean ancillary qutrits as state-of-the-art work while replacing linear nesting depth with logarithmic depth. For arbitrary widths, the sequential extension preserves the $6n+3$ $P_9$ count but has depth $O(\log m+n-m)$ and is therefore linear in the worst case over $n$. The same tree structure also provides a constructive tradeoff between ancillary workspace and non-Clifford cost. The present work suggests several directions for further study.
First, the predicate-tree construction can be evaluated under more
detailed fault-tolerant cost models that include magic-state
distillation, error-correction cycles, routing overhead, and
architecture-dependent connectivity constraints.
Second, alternative intermediate-qutrit setting may further reduce the $P_9$ count or ancillary requirement. Third, the proposed decomposition can be incorporated into larger
reversible subroutines, including arithmetic circuits, oracle
constructions, and amplitude-amplification algorithms, to quantify its impact at the algorithmic level. Finally, extending the same principles to higher-dimensional qudit architectures may reveal additional tradeoffs among non-Clifford cost, ancillary workspace, and circuit depth. We conclude that the proposed decomposition provides a resource-efficient building block for quantum algorithms.

\bibliographystyle{apsrev4-1}
\bibliography{ref}

\clearpage
\onecolumngrid
\begin{center}
{\large\bfseries Supplemental Material for\\[0.4em]
``Efficient Synthesis of Multi-Controlled Toffoli Gates with Ternary Clifford$+P_9$ Gates''\par}
\vspace{0.8em}
Amit Saha and Francesco Arzani\\[0.25em]
{\small DI-ENS, École Normale Supérieure, Université PSL, CNRS, INRIA,\\
45 rue d'Ulm, 75005 Paris, France}
\end{center}
\vspace{0.8em}
\twocolumngrid

\section{Example decomposition circuit}

\begin{figure}[!htb]
	\centering
	\[
	\scalebox{0.6}{
		\Qcircuit @R=0.5em @C=0.1em {
			\lstick{\ket{q_{0}}} & \qw & \qw & \qw & \qw & \qw & \onecontrol & \qw & \qw & \qw & \qw & \qw & \qw  & \qw & \qw  & \qw & \onecontrol & \qw & \qw & \qw & \qw & \qw\\
			\lstick{\ket{q_{1}}} & \qw & \qw & \qw & \qw & \qw & \qw \qwx & \gate{X_3^{+1}} & \qw &  \twocontrol & \qw &  \qw  & \qw & \twocontrol & \qw & \gate{X_3^{-1}} & \qw \qwx & \qw & \qw & \qw & \qw & \qw\\
			\lstick{\ket{q_{2}}} & \qw & \qw & \qw & \qw & \qw & \gate{X_3^{+1}} \qwx & \twocontrol \qwx & \qw &  \qw \qwx & \qw  & \qw  & \qw & \qw \qwx & \qw  & \twocontrol \qwx & \gate{X_3^{-1}} \qwx & \qw & \qw & \qw & \qw & \qw\\
			\lstick{\ket{q_{3}}} & \qw & \qw & \qw & \qw & \qw & \qw & \qw & \qw  & \qw \qwx & \gate{X_3^{+1}}  & \twocontrol  & \gate{X_3^{-1}} & \qw \qwx & \qw & \qw & \qw & \qw & \qw & \qw & \qw & \qw\\
			\lstick{\ket{q_{4}}} & \qw & \qw & \qw & \qw & \qw & \onecontrol & \qw & \qw & \qw \qwx & \qw \qwx  & \qw \qwx  & \qw \qwx & \qw \qwx & \qw  & \qw & \onecontrol & \qw & \qw & \qw & \qw & \qw\\
			\lstick{\ket{q_{5}}} & \qw & \qw & \qw & \qw & \qw & \qw \qwx & \gate{X_3^{+1}} & \twocontrol & \qw \qwx & \qw \qwx  & \qw \qwx  & \qw \qwx &\qw \qwx & \twocontrol  & \gate{X_3^{-1}} & \qw \qwx & \qw & \qw & \qw & \qw & \qw\\
			\lstick{\ket{q_{6}}} & \qw & \qw & \qw & \qw & \qw & \gate{X_3^{+1}} \qwx & \twocontrol \qwx & \qw\qwx  & \qw\qwx & \qw\qwx  & \qw\qwx  & \qw\qwx & \qw\qwx & \qw \qwx & \twocontrol \qwx & \gate{X_3^{-1}} \qwx & \qw & \qw & \qw & \qw & \qw\\
			\lstick{\ket{q_{7}}} & \qw & \qw & \qw & \qw & \qw & \qw & \qw  & \qw\qwx & \qw\qwx & \qw\qwx &  \gate{X_{01}} \qwx & \qw\qwx & \qw\qwx & \qw\qwx  & \qw & \qw & \qw & \qw & \qw & \qw & \qw\\
			\lstick{\ket{0}_{a}} & \qw & \qw & \qw & \qw & \qw & \qw  & \qw & \gate{X_3^{+1}} \qwx & \gate{X_3^{+1}} \qwx & \twocontrol \qwx  & \qw   & \twocontrol \qwx & \gate{X_3^{-1}} \qwx & \gate{X_3^{-1}} \qwx  & \qw & \qw  & \qw & \qw & \qw & \qw & \qw\\
			&  &  &  &  &  &  &   &  &  &  &   &  &  &   &  &  &  &  &  &  & \\
			&  &  &  &  &  &  &   &  &  &  &    &  &  &   &  &  &  &  &  &  & \\
			&  &  &  &  &  &  &   &  &  &  &  \mbox{(d)}  &  &  &   &  &  &  &  &  &  & \\
	}}
	\]
	\caption{Decomposition of a seven-controlled MCT gate. The logical controls are $q_{0},\ldots,q_{6}$ and the target is $q_{7}$. Red circles denote activation on $\ket{1}$, whereas blue circles denote activation on $\ket{2}$. The ternary Clifford SUM and $\mathrm{SUM}^{\dagger}$ gates generate and uncompute the two leaf markers. The state-selective $C_{2}(\mathrm{INC})^{\pm1}$ operations propagate and remove the complete all-controls-satisfied predicate using the clean ancillary qutrit $a$. The root marker activates the unique strict target toggle $C_{2}(X_{01})$. The circuit requires $45P_{9}$ injections and one clean ancillary qutrit.}
	\label{fig:seven-control-balanced-tree}
\end{figure}

\begin{figure}[!htb]
\centering
\[
\scalebox{0.4}{
\Qcircuit @R=0.45em @C=0.18em {
\lstick{\ket{q_{0}}} & \onecontrol & \qw & \qw & \qw & \qw & \qw & \qw & \qw & \qw & \qw & \qw & \qw & \qw & \qw & \qw & \qw & \qw & \qw & \qw & \qw & \qw & \qw & \onecontrol & \qw \\
\lstick{\ket{q_{1}}} & \gate{X_3^{+1}}\qwx[-1] & \twocontrol & \qw & \qw & \qw & \qw & \qw & \qw & \qw & \qw & \qw & \qw & \qw & \qw & \qw & \qw & \qw & \qw & \qw & \qw & \qw & \twocontrol & \gate{X_3^{-1}}\qwx[-1] & \qw \\
\lstick{\ket{q_{2}}} & \qw & \gate{X_3^{+1}}\qwx[-1] & \twocontrol & \qw & \qw & \qw & \qw & \qw & \qw & \qw & \qw & \qw & \qw & \qw & \qw & \qw & \qw & \qw & \qw & \qw & \twocontrol & \gate{X_3^{-1}}\qwx[-1] & \qw & \qw \\
\lstick{\ket{q_{3}}} & \qw & \qw & \qw & \qw & \qw & \qw & \gate{X_3^{+1}}\qwx[13] & \qw & \twocontrol & \qw & \qw & \qw & \qw & \qw & \twocontrol & \qw & \gate{X_3^{-1}}\qwx[13] & \qw & \qw & \qw & \qw & \qw & \qw & \qw \\
\lstick{\ket{q_{4}}} & \onecontrol & \qw & \qw & \qw & \qw & \qw & \qw & \qw & \qw & \qw & \qw & \qw & \qw & \qw & \qw & \qw & \qw & \qw & \qw & \qw & \qw & \qw & \onecontrol & \qw \\
\lstick{\ket{q_{5}}} & \gate{X_3^{+1}}\qwx[-1] & \twocontrol & \qw & \qw & \qw & \qw & \qw & \qw & \qw & \qw & \qw & \qw & \qw & \qw & \qw & \qw & \qw & \qw & \qw & \qw & \qw & \twocontrol & \gate{X_3^{-1}}\qwx[-1] & \qw \\
\lstick{\ket{q_{6}}} & \qw & \gate{X_3^{+1}}\qwx[-1] & \qw & \qw & \twocontrol & \qw & \qw & \qw & \qw & \qw & \qw & \qw & \qw & \qw & \qw & \qw & \qw & \qw & \twocontrol & \qw & \qw & \gate{X_3^{-1}}\qwx[-1] & \qw & \qw \\
\lstick{\ket{q_{7}}} & \qw & \qw & \qw & \qw & \qw & \qw & \qw & \qw & \qw & \qw & \gate{X_3^{+1}}\qwx[11] & \twocontrol & \gate{X_3^{-1}}\qwx[11] & \qw & \qw & \qw & \qw & \qw & \qw & \qw & \qw & \qw & \qw & \qw \\
\lstick{\ket{q_{8}}} & \onecontrol & \qw & \qw & \qw & \qw & \qw & \qw & \qw & \qw & \qw & \qw & \qw & \qw & \qw & \qw & \qw & \qw & \qw & \qw & \qw & \qw & \qw & \onecontrol & \qw \\
\lstick{\ket{q_{9}}} & \gate{X_3^{+1}}\qwx[-1] & \twocontrol & \qw & \qw & \qw & \qw & \qw & \qw & \qw & \qw & \qw & \qw & \qw & \qw & \qw & \qw & \qw & \qw & \qw & \qw & \qw & \twocontrol & \gate{X_3^{-1}}\qwx[-1] & \qw \\
\lstick{\ket{q_{10}}} & \qw & \gate{X_3^{+1}}\qwx[-1] & \qw & \twocontrol & \qw & \qw & \qw & \qw & \qw & \qw & \qw & \qw & \qw & \qw & \qw & \qw & \qw & \qw & \qw & \twocontrol & \qw & \gate{X_3^{-1}}\qwx[-1] & \qw & \qw \\
\lstick{\ket{q_{11}}} & \qw & \qw & \qw & \qw & \qw & \qw & \qw & \gate{X_3^{+1}}\qwx[6] & \qw & \twocontrol & \qw & \qw & \qw & \twocontrol & \qw & \gate{X_3^{-1}}\qwx[6] & \qw & \qw & \qw & \qw & \qw & \qw & \qw & \qw \\
\lstick{\ket{q_{12}}} & \onecontrol & \qw & \qw & \qw & \qw & \qw & \qw & \qw & \qw & \qw & \qw & \qw & \qw & \qw & \qw & \qw & \qw & \qw & \qw & \qw & \qw & \qw & \onecontrol & \qw \\
\lstick{\ket{q_{13}}} & \gate{X_3^{+1}}\qwx[-1] & \twocontrol & \qw & \qw & \qw & \qw & \qw & \qw & \qw & \qw & \qw & \qw & \qw & \qw & \qw & \qw & \qw & \qw & \qw & \qw & \qw & \twocontrol & \gate{X_3^{-1}}\qwx[-1] & \qw \\
\lstick{\ket{q_{14}}} & \qw & \gate{X_3^{+1}}\qwx[-1] & \qw & \qw & \qw & \twocontrol & \qw & \qw & \qw & \qw & \qw & \qw & \qw & \qw & \qw & \qw & \qw & \twocontrol & \qw & \qw & \qw & \gate{X_3^{-1}}\qwx[-1] & \qw & \qw \\
\lstick{\ket{q_{15}}} & \qw & \qw & \qw & \qw & \qw & \qw & \qw & \qw & \qw & \qw & \qw & \gate{X_{01}}\qwx[-8] & \qw & \qw & \qw & \qw & \qw & \qw & \qw & \qw & \qw & \qw & \qw & \qw \\
\lstick{\ket{0}_{a_{0}}} & \qw & \qw & \gate{X_3^{+1}}\qwx[-14] & \qw & \gate{X_3^{+1}}\qwx[-10] & \qw & \twocontrol & \qw & \qw & \qw & \qw & \qw & \qw & \qw & \qw & \qw & \twocontrol & \qw & \gate{X_3^{-1}}\qwx[-10] & \qw & \gate{X_3^{-1}}\qwx[-14] & \qw & \qw & \qw \\
\lstick{\ket{0}_{a_{1}}} & \qw & \qw & \qw & \gate{X_3^{+1}}\qwx[-7] & \qw & \gate{X_3^{+1}}\qwx[-3] & \qw & \twocontrol & \qw & \qw & \qw & \qw & \qw & \qw & \qw & \twocontrol & \qw & \gate{X_3^{-1}}\qwx[-3] & \qw & \gate{X_3^{-1}}\qwx[-7] & \qw & \qw & \qw & \qw \\
\lstick{\ket{0}_{a_{2}}} & \qw & \qw & \qw & \qw & \qw & \qw & \qw & \qw & \gate{X_3^{+1}}\qwx[-15] & \gate{X_3^{+1}}\qwx[-7] & \twocontrol & \qw & \twocontrol & \gate{X_3^{-1}}\qwx[-7] & \gate{X_3^{-1}}\qwx[-15] & \qw & \qw & \qw & \qw & \qw & \qw & \qw & \qw & \qw
}
}
\]
\caption{
Fifteen-control proposed intermediate-qutrit decomposition of an MCT gate.
The logical controls are $q_{0},\ldots,q_{14}$ and the target is
$q_{15}$.
A red circle labeled $1$ denotes activation on $\ket{1}$, whereas
a blue circle labeled $2$ denotes activation on $\ket{2}$.
The branch-local propagation operations are horizontally staggered to
keep the vertical connections visually distinct.
Four ternary Clifford SUM gates generate the leaf markers in parallel.
The ancillary qutrits $a_{0}$ and $a_{1}$ form two seven-control
subtree predicates, while $a_{2}$ combines them into the root
predicate.
The marker on $q_{7}$ activates the unique strict root toggle
$C_{2}(X_{01})$ on $q_{15}$.
The inverse half of the circuit restores all temporary marker states
and returns $a_{0}$, $a_{1}$, and $a_{2}$ to $\ket{0}$.
The circuit requires $93P_{9}$ injections and three clean ancillary
qutrits.
}
\label{fig:fifteen-control-balanced-tree}
\end{figure}

\section{Local qutrit identities}

\subsection{Three-control leaf block}

The leaf block is
\begin{equation}\nonumber
L(u,v,w)
=
\Csel_{v,w}
\,
\SUM_{u,v}.
\end{equation}

For binary inputs
$u,v,w\in\{0,1\}$,
the truth table is:

\begin{table}[h]
\caption{
Truth table of the three-control leaf block.
}
\label{tab:leaf-table}
\begin{tabular}{cccc}
\toprule
Input
&
after $\SUM_{u,v}$
&
after $\Csel_{v,w}$
&
$w_{\mathrm{out}}=2$?
\\
\midrule
$(0,0,0)$ & $(0,0,0)$ & $(0,0,0)$ & no \\
$(0,0,1)$ & $(0,0,1)$ & $(0,0,1)$ & no \\
$(0,1,0)$ & $(0,1,0)$ & $(0,1,0)$ & no \\
$(0,1,1)$ & $(0,1,1)$ & $(0,1,1)$ & no \\
$(1,0,0)$ & $(1,1,0)$ & $(1,1,0)$ & no \\
$(1,0,1)$ & $(1,1,1)$ & $(1,1,1)$ & no \\
$(1,1,0)$ & $(1,2,0)$ & $(1,2,1)$ & no \\
$(1,1,1)$ & $(1,2,1)$ & $(1,2,2)$ & yes \\
\bottomrule
\end{tabular}
\end{table}

Thus,
\begin{equation}\nonumber
w_{\mathrm{out}}=2
\quad\Longleftrightarrow\quad
u=v=w=1.
\end{equation}

\subsection{Internal merge}

Let
$x$
and
$y$
be the markers of two previously computed child subtrees,
let
$m$
be a binary middle control,
and let
$a$
be a clean ancillary qutrit.
The merge is
\begin{equation}\nonumber
M(x,m,y;a)
=
\Csel_{a,m}
\,
\Csel_{y,a}
\,
\Csel_{x,a}.
\end{equation}

Define
\begin{equation}\nonumber
b_x
=
\delta_{x,2},
\qquad
b_y
=
\delta_{y,2}.
\end{equation}
Starting from
$a=0$,
the first two chronological operations give
\begin{equation}\nonumber
a=b_x+b_y\pmod3.
\end{equation}
Hence,
\begin{equation}\nonumber
a=2
\quad\Longleftrightarrow\quad
b_x=b_y=1.
\end{equation}

\begin{table}[h]
\caption{
Truth table of the internal merge.
The middle control is initially binary.
}
\label{tab:merge-table}
\begin{tabular}{cccccc}
\toprule
$b_x$
&
$b_y$
&
$m_{\mathrm{in}}$
&
$a$
&
$m_{\mathrm{out}}$
&
$m_{\mathrm{out}}=2$?
\\
\midrule
$0$ & $0$ & $0$ & $0$ & $0$ & no \\
$0$ & $0$ & $1$ & $0$ & $1$ & no \\
$0$ & $1$ & $0$ & $1$ & $0$ & no \\
$0$ & $1$ & $1$ & $1$ & $1$ & no \\
$1$ & $0$ & $0$ & $1$ & $0$ & no \\
$1$ & $0$ & $1$ & $1$ & $1$ & no \\
$1$ & $1$ & $0$ & $2$ & $1$ & no \\
$1$ & $1$ & $1$ & $2$ & $2$ & yes \\
\bottomrule
\end{tabular}
\end{table}

Thus,
\begin{equation}\nonumber
m_{\mathrm{out}}=2
\quad\Longleftrightarrow\quad
x=y=2
\ \text{and}\
m_{\mathrm{in}}=1.
\end{equation}

\section{Balanced-tree gate count}

For
\begin{equation}\nonumber
n=2^h-1,
\qquad
h\geq2,
\end{equation}
the number of leaf blocks is
\begin{equation}\nonumber
L(n)
=
2^{h-2}
=
\frac{n+1}{4}.
\end{equation}

The number of internal merges is
\begin{equation}\nonumber
I(n)
=
L(n)-1
=
\frac{n-3}{4}.
\end{equation}

Each leaf contributes one forward
$\Csel$
gate.
Each internal merge contributes three.
Therefore,
\begin{align}
N_{\mathrm{sel}}^{\mathrm{forward}}(n)
&=
L(n)+3I(n) \nonumber
\\ 
&=
\frac{n+1}{4}
+
3\frac{n-3}{4} \nonumber
\\ 
&=
n-2. \nonumber
\end{align}

The inverse tree contributes the same number:
\begin{equation}\nonumber
N_{\mathrm{sel}}(n)
=
2n-4.
\end{equation}

Using
\begin{equation}\nonumber
N_{P_9}
\left[
\Csel^{\pm1}
\right]
=
3,
\end{equation}
and
\begin{equation}\nonumber
N_{P_9}
\left[
\Cstrict
\right]
=
15,
\end{equation}
we obtain
\begin{align}
N_{P_9}^{\mathrm{tree}}(n)
&=
3(2n-4)+15 \nonumber
\\
&=
6n+3. \nonumber
\end{align}

\section{Primitive-gate depth}

This section reports a depth upper bound at the level of the declared
qutrit primitives:
\begin{equation}\nonumber
\SUM,
\qquad
\Csel,
\qquad
\Cstrict.
\end{equation}
It is not an exact
$P_9$-depth
claim.
An exact
$P_9$-depth
requires an explicit schedule for the internal Clifford+$P_9$
synthesis of each macro-gate.

For a balanced tree of height
$h$,
the forward traversal consists of:

\begin{enumerate}
\item one Clifford SUM layer and one
$\Csel$
layer for the leaves;
\item
$h-2$
internal levels, each requiring three sequential
$\Csel$
layers.
\end{enumerate}

Thus,
\begin{equation}\nonumber
D_W(h)
=
2+3(h-2)
=
3h-4.
\end{equation}

Including the inverse traversal and the strict root toggle,
\begin{equation}\nonumber
\boxed{
D_{\mathrm{prim}}(h)
=
2D_W(h)+1
=
6h-7.
}
\end{equation}

Since
\begin{equation}\nonumber
h
=
\log_2(n+1),
\end{equation}
we obtain
\begin{equation}\nonumber
D_{\mathrm{prim}}(n)
=
6\log_2(n+1)-7.
\end{equation}

For seven controls,
\begin{equation}\nonumber
D_{\mathrm{prim}}(7)
=
11.
\end{equation}

For fifteen controls,
\begin{equation}\nonumber
D_{\mathrm{prim}}(15)
=
17.
\end{equation}

These are primitive-gate depth upper bounds for the branch-local
schedule.

\section{Constructive width--cost frontier}

Let
\begin{equation}\nonumber
1
\leq
A
\leq
I(n)
=
\frac{n-3}{4}
\end{equation}
be the maximum number of simultaneously live merge ancillas.

The branch-local endpoint uses
\begin{equation}\nonumber
A=I(n)
\end{equation}
and postpones every ancillary cleanup until the global inverse
traversal.

If a merge ancilla must be reused during the forward traversal, it
is returned to
$\ket{0}$
by applying
\begin{equation}\nonumber
\Csel_{y,a}^{\dagger},
\qquad
\Csel_{x,a}^{\dagger}.
\end{equation}
The inverse tree contains the corresponding additional operations.
Therefore, every removed retained ancilla adds four state-selective
macro-gates, or
\begin{equation}\nonumber
12P_9.
\end{equation}

The resulting achievable frontier is
\begin{align}
N_{P_9}^{\mathrm{tree}}(n,A)
&=
6n+3
+
12
\left[
I(n)-A
\right] \nonumber
\\
&=
6n+3
+
12
\left[
\frac{n-3}{4}-A
\right] \nonumber
\\
&=
\boxed{
9n-6-12A. \nonumber
}
\end{align}

At
\begin{equation}\nonumber
A=1,
\end{equation}
we obtain
\begin{equation}\nonumber
N_{P_9}^{\mathrm{reuse}}(n)
=
9n-18.
\end{equation}

At
\begin{equation}\nonumber
A=\frac{n-3}{4},
\end{equation}
we recover
\begin{equation}\nonumber
N_{P_9}^{\mathrm{tree}}(n)
=
6n+3.
\end{equation}

\begin{table}[h]
\caption{
Constructive width--cost frontier for a fifteen-control MCT gate.
}
\label{tab:fifteen-frontier}
\begin{tabular}{ccc}
\toprule
Live merge ancillas
&
$P_9$ count
&
Schedule
\\
\midrule
$1$ & $117$ & depth-first reuse \\
$2$ & $105$ & intermediate batching \\
$3$ & $93$ & branch-local parallelism \\
\bottomrule
\end{tabular}
\end{table}

This frontier is model-specific.
It is not a global lower bound over every possible
Clifford+$P_9$
synthesis.

\section{Extension to arbitrary control width}

Let
\begin{equation}\nonumber
m=2^h-1
\leq n
\end{equation}
be a chosen balanced-core width.
The core network produces a marker
$r$
satisfying
\begin{equation}\nonumber
r=2
\quad\Longleftrightarrow\quad
c_1=\cdots=c_m=1.
\end{equation}

Let
$c$
be an additional binary control.
Apply
\begin{equation}\nonumber
\Csel_{r,c}.
\end{equation}
Its action is
\begin{equation}\nonumber
\ket{r,c}
\longmapsto
\ket{
r,
c+\delta_{r,2}\!\!\!\pmod3
}.
\end{equation}

Because
$c\in\{0,1\}$,
the new marker satisfies
\begin{equation}\nonumber
c_{\mathrm{out}}=2
\quad\Longleftrightarrow\quad
r=2
\ \text{and}\
c_{\mathrm{in}}=1.
\end{equation}

Thus, one appended control extends the predicate by one Boolean
conjunction.
The matching inverse gate restores the appended control during
uncomputation.

Appending
$n-m$
controls requires
\begin{equation}\nonumber
n-m
\end{equation}
forward gates and
\begin{equation}\nonumber
n-m
\end{equation}
inverse gates.
Each appended control therefore contributes
\begin{equation}\nonumber
6P_9.
\end{equation}

Hence,
\begin{align}
N_{P_9}^{\mathrm{tree}}(n;m)
&=
6m+3
+
6(n-m) \nonumber
\\
&=
\boxed{
6n+3. \nonumber
}
\end{align}

The ancillary requirement is
\begin{equation}\nonumber
N_{\mathrm{anc}}(n;m)
=
\frac{m-3}{4},
\end{equation}
and a primitive-gate depth upper bound is
\begin{equation}\nonumber
D_{\mathrm{prim}}(n;m)
\leq
6\log_2(m+1)-7
+
2(n-m).
\end{equation}

Choosing the largest balanced core maximizes parallelism.
Choosing a smaller core reduces ancillary width but increases the
sequential extension tail.

\section{Recursive ternary baselines derived from \cite{Bocharov2017}}

The labels Baseline A and Baseline B are introduced for convenience.
They are not terminology used in
\cite{Bocharov2017}.

\subsection{Baseline A}

The work of
\cite{Bocharov2017}
gives a two-control Toffoli emulation with:

\begin{equation}\nonumber
N_{P_9}^{(A)}(2)=12,
\qquad
N_{\mathrm{anc}}^{(A)}(2)=1.
\end{equation}

Each control extension adds:

\begin{equation}\nonumber
6P_9
\end{equation}
and one clean ancillary qutrit.
Therefore,
\begin{equation}\nonumber
N_{P_9}^{(A)}(n)
=
6n,
\end{equation}
and
\begin{equation}\nonumber
N_{\mathrm{anc}}^{(A)}(n)
=
n-1.
\end{equation}

\subsection{Baseline B}

The work of
\cite{Bocharov2017}
also gives an ancilla-free two-control Toffoli emulation with:

\begin{equation}\nonumber
N_{P_9}^{(B)}(2)=15,
\qquad
N_{\mathrm{anc}}^{(B)}(2)=0.
\end{equation}

Each control extension again adds:

\begin{equation}\nonumber
6P_9
\end{equation}
and one clean ancillary qutrit.
Therefore,
\begin{equation}\nonumber
N_{P_9}^{(B)}(n)
=
6n+3,
\end{equation}
and
\begin{equation}\nonumber
N_{\mathrm{anc}}^{(B)}(n)
=
n-2.
\end{equation}

Repeated control extensions are sequentially nested, so both
baseline depths scale as
\begin{equation}\nonumber
\Theta(n).
\end{equation}

\section{
Conditional Steane-type comparison with recursive ternary baselines
}
\label{sec:bocharov-steane-comparison}

The logical comparison in the main text is performed within the
generic ternary Clifford+$P_9$ model of
\cite{Bocharov2017}.
To examine the possible encoded-level implications of the proposed
tree, we now compare it with the two recursive ternary baselines
defined in the preceding section under a common seven-qutrit
Steane-type code stack.

This comparison is conditional.
\cite{Bocharov2017} specifies the logical ternary
Clifford+$P_9$ resource model, but it does not fix a unique
seven-qutrit Steane-type implementation, syndrome-extraction circuit,
routing architecture, or logical $P_9$ factory.
Accordingly, the analysis below separates the exact logical
$P_9$ contribution from architecture-dependent encoded overheads.

\subsection{Logical resources}

For balanced widths
\begin{equation}\nonumber
n=2^h-1,
\end{equation}
the branch-local tree requires
\begin{equation}\nonumber
N_{P_9}^{\mathrm{tree}}(n)
=
6n+3,
\end{equation}
\begin{equation}\nonumber
N_{\mathrm{anc}}^{\mathrm{tree}}(n)
=
\frac{n-3}{4},
\end{equation}
and primitive-gate depth
\begin{equation}\nonumber
D_{\mathrm{tree}}(n)
=
6\log_2(n+1)-7.
\end{equation}

The recursive ternary baselines derived from
\cite{Bocharov2017}
have
\begin{align}
N_{P_9}^{(A)}(n)
&=
6n,
&
N_{\mathrm{anc}}^{(A)}(n)
&=
n-1, \nonumber
\\
N_{P_9}^{(B)}(n)
&=
6n+3,
&
N_{\mathrm{anc}}^{(B)}(n)
&=
n-2. \nonumber
\end{align}

Repeated control extensions are sequentially nested.
Their scheduled depths therefore scale as
\begin{equation}\nonumber
D_A(n)
=
\Theta(n),
\qquad
D_B(n)
=
\Theta(n).
\end{equation}

\begin{table}[h]
\caption{
Logical comparison with recursive ternary baselines derived from
\cite{Bocharov2017}.
The exact primitive-gate depth of each recursively nested baseline
depends on the selected scheduling convention; its asymptotic scaling
is linear.
}
\label{tab:bocharov-logical-comparison}
\begin{tabular}{lccc}
\toprule
Method
&
$P_9$ count
&
Clean ancillas
&
Depth
\\
\midrule
Tree, branch-local
&
$6n+3$
&
$(n-3)/4$
&
$\Theta(\log n)$
\\
Baseline A
&
$6n$
&
$n-1$
&
$\Theta(n)$
\\
Baseline B
&
$6n+3$
&
$n-2$
&
$\Theta(n)$
\\
\bottomrule
\end{tabular}
\end{table}

For representative balanced widths:

\begin{table}[h]
\caption{
Representative logical-resource counts.
}
\label{tab:bocharov-representative-counts}
\resizebox{\columnwidth}{!}{%
\begin{tabular}{lcccc}
\toprule
Controls
&
Method
&
$P_9$ count
&
Clean ancillas
&
Tree depth
\\
\midrule
$7$
&
Tree, branch-local
&
$45$
&
$1$
&
$11$
\\
$7$
&
Baseline A
&
$42$
&
$6$
&
linear
\\
$7$
&
Baseline B
&
$45$
&
$5$
&
linear
\\
\midrule
$15$
&
Tree, branch-local
&
$93$
&
$3$
&
$17$
\\
$15$
&
Baseline A
&
$90$
&
$14$
&
linear
\\
$15$
&
Baseline B
&
$93$
&
$13$
&
linear
\\
\bottomrule
\end{tabular}}
\end{table}

\subsection{Common encoded-resource model}

Let
\begin{equation}\nonumber
\mathcal{F}_{P_9}
\left(
\varepsilon,
p_3,
k_3
\right)
\end{equation}
denote the encoded carrier-cycle cost of preparing and injecting one
logical
$P_9$
resource state with target failure probability
$\varepsilon$.
Here,
$p_3$
is the physical qutrit-location error probability and
$k_3$
is the number of ternary concatenation levels.

Let
\begin{equation}\nonumber
R_j(n)
\end{equation}
denote the remaining encoded overhead of construction
$j$,
where
\begin{equation}\nonumber
j\in
\left\{
\mathrm{tree},
A,
B
\right\}.
\end{equation}
The remainder term includes encoded ternary Clifford operations,
syndrome extraction, clean-ancilla preparation, storage,
routing, reset, measurement, and classical feed-forward where
required.

For an allowed failure probability
\begin{equation}\nonumber
\varepsilon_{\MCT}
\end{equation}
for one encoded MCT invocation, a conservative union-bound
allocation gives
\begin{align}
V_{\mathrm{tree}}(n)
={}&
(6n+3)
\,
\mathcal{F}_{P_9}
\left(
\frac{\varepsilon_{\MCT}}{6n+3},
p_3,
k_3
\right)
+
R_{\mathrm{tree}}(n), \nonumber
\\
V_A(n)
={}&
6n
\,
\mathcal{F}_{P_9}
\left(
\frac{\varepsilon_{\MCT}}{6n},
p_3,
k_3
\right)
+
R_A(n), \nonumber
\label{eq:encoded-baseline-a-cost}
\\
V_B(n)
={}&
(6n+3)
\,
\mathcal{F}_{P_9}
\left(
\frac{\varepsilon_{\MCT}}{6n+3},
p_3,
k_3
\right)
+
R_B(n). \nonumber
\end{align}

These expressions use the same encoded qutrit code stack and the same
logical
$P_9$
factory model for all three constructions.

\subsection{Comparison with Baseline B}

The proposed tree and Baseline B have identical logical
$P_9$
counts:
\begin{equation}\nonumber
N_{P_9}^{\mathrm{tree}}(n)
=
N_{P_9}^{(B)}(n)
=
6n+3.
\end{equation}

Consequently, their encoded
$P_9$
factory contributions cancel exactly:
\begin{equation}\nonumber
V_B(n)-V_{\mathrm{tree}}(n)
=
R_B(n)-R_{\mathrm{tree}}(n).
\label{eq:baseline-b-cancellation}
\end{equation}

The comparison is therefore determined entirely by encoded remainder
costs.
The tree reduces the clean ancillary-qutrit requirement from
\begin{equation}\nonumber
n-2
\end{equation}
to
\begin{equation}\nonumber
\frac{n-3}{4},
\end{equation}
and replaces sequentially nested depth by logarithmic depth.

Thus, under any common encoded implementation for which the reduced
ancilla preparation, storage, routing, and error-correction exposure
lower the remainder term,
\begin{equation}\nonumber
R_{\mathrm{tree}}(n)
<
R_B(n),
\end{equation}
the proposed tree has lower total encoded cost:
\begin{equation}\nonumber
\boxed{
V_{\mathrm{tree}}(n)
<
V_B(n).
}
\label{eq:baseline-b-advantage}
\end{equation}

The guaranteed logical improvement is the ancillary-width and depth
reduction.
The physical-volume improvement remains conditional on the selected
encoded architecture.

\subsection{Comparison with Baseline A}

Baseline A requires three fewer logical
$P_9$
injections:
\begin{equation}\nonumber
N_{P_9}^{\mathrm{tree}}(n)
-
N_{P_9}^{(A)}(n)
=
3.
\end{equation}

Define
\begin{align}
\Delta \mathcal{P}_A(n)
={}&
(6n+3)
\,
\mathcal{F}_{P_9}
\left(
\frac{\varepsilon_{\MCT}}{6n+3},
p_3,
k_3
\right)
\nonumber
\\
&
-
6n
\,
\mathcal{F}_{P_9}
\left(
\frac{\varepsilon_{\MCT}}{6n},
p_3,
k_3
\right). \nonumber
\label{eq:baseline-a-extra-p9-cost}
\end{align}

The proposed tree is favorable whenever the reduction in encoded
remainder cost exceeds the additional
$P_9$
factory cost:
\begin{equation}\nonumber
\boxed{
R_A(n)-R_{\mathrm{tree}}(n)
>
\Delta \mathcal{P}_A(n).
}
\label{eq:baseline-a-crossover-exact}
\end{equation}

If the per-state output-accuracy dependence of the
$P_9$
factory is neglected in a first-order approximation, then
\begin{equation}\nonumber
\Delta \mathcal{P}_A(n)
\approx
3\mathcal{F}_{P_9}.
\end{equation}
The crossover condition simplifies to
\begin{equation}\nonumber
\boxed{
R_A(n)-R_{\mathrm{tree}}(n)
>
3\mathcal{F}_{P_9}.
}
\label{eq:baseline-a-crossover-simple}
\end{equation}

Although Baseline A is slightly cheaper in logical
$P_9$
count, it uses
\begin{equation}\nonumber
n-1
\end{equation}
clean ancillary qutrits and has linear nesting depth.
The proposed tree can therefore become favorable when ancillary
preparation, storage, routing, and error-correction exposure are
included.

\subsection{Ancilla-time sensitivity model}

To make the width--depth effect explicit, introduce a simplified
encoded sensitivity model:
\begin{equation}\nonumber
R_j(n)
=
\mu_3
N_{\mathrm{anc}}^{(j)}(n)
D_j(n)
+
R_j^{\mathrm{other}}(n),
\label{eq:ancilla-time-model}
\end{equation}
where
\begin{equation}\nonumber
\mu_3>0
\end{equation}
is the encoded carrier-cycle cost of maintaining one clean logical
qutrit ancilla for one primitive-gate time step.
The quantity
\begin{equation}\nonumber
R_j^{\mathrm{other}}(n)
\end{equation}
collects all remaining overheads not included in this
ancilla-time proxy.

For the proposed tree,
\begin{equation}\nonumber
R_{\mathrm{tree}}^{\mathrm{anc}}(n)
=
\mu_3
\frac{n-3}{4}
\left[
6\log_2(n+1)-7
\right].
\label{eq:tree-ancilla-time}
\end{equation}

For the recursively nested baselines,
\begin{align}
R_A^{\mathrm{anc}}(n)
&=
\mu_3
(n-1)
D_A(n), \nonumber
\\
R_B^{\mathrm{anc}}(n)
&=
\mu_3
(n-2)
D_B(n). \nonumber
\end{align}

Because
\begin{equation}\nonumber
D_A(n),
D_B(n)
=
\Theta(n),
\end{equation}
whereas
\begin{equation}\nonumber
D_{\mathrm{tree}}(n)
=
\Theta(\log n),
\end{equation}
the ancillary-time proxy scales as
\begin{align}
R_{\mathrm{tree}}^{\mathrm{anc}}(n)
&=
O(n\log n), \nonumber
\\
R_A^{\mathrm{anc}}(n),
R_B^{\mathrm{anc}}(n)
&=
O(n^2). \nonumber
\end{align}

This model is illustrative rather than hardware independent.
It shows that the logical tree advantage can translate into a growing
encoded-level benefit whenever clean-ancilla storage and repeated
error correction contribute appreciably to the total resource volume.

\subsection{Summary of the encoded ternary comparison}

The comparison with the recursive ternary baselines leads to two
distinct conclusions. First, relative to Baseline B, the proposed tree preserves the exact
logical
$P_9$
count while reducing ancillary width and scheduled depth.
Its
$P_9$
factory contribution is therefore identical, and any reduction in
encoded remainder cost gives a direct physical advantage.

Second, relative to Baseline A, the proposed tree trades three
additional logical
$P_9$
injections for a substantial reduction in ancillary width and
scheduled depth.
Its encoded advantage is conditional on the reduction in storage,
routing, and error-correction overhead exceeding the cost of the
three additional
$P_9$
resource states.

\section{Conditional Steane-code comparison with binary-only circuits}

The logical analysis in the main text is exact within the ternary
Clifford+$P_9$
model.
A physical fault-tolerant comparison requires an explicit encoded
stack and a calibrated noise model.

A carrier that accesses
$\ket{2}$
during a fault-tolerant computation must be protected by a qutrit
QECC from the beginning of the computation
\cite{MajumdarHybridFT}.
We therefore compare:

\begin{enumerate}
\item a binary-only circuit protected by the binary
$[[7,1,3]]_2$
Steane code;
\item an all-qutrit circuit protected by a seven-qutrit
Steane-type code.
\end{enumerate}

The comparison below is conditional.
It does not claim a hardware-independent physical advantage.

\subsection{Binary-only reference chains}

A standard clean-ancilla binary decomposition of an
$n$-control MCT gate contains
\begin{equation}\nonumber
2n-3
\end{equation}
ordinary three-qubit Toffoli blocks \cite{Barenco1995}.

Using an exact unitary seven-$T$
Toffoli decomposition \cite{Selinger2013} gives
\begin{equation}\nonumber
\boxed{
N_{T,7}^{\mathrm{bin}}(n)
=
7(2n-3).
}
\label{eq:binary-seven-t}
\end{equation}

A measurement-assisted four-$T$
Toffoli construction gives the alternative benchmark \cite{Jones2013}
\begin{equation}\nonumber
\boxed{
N_{T,4}^{\mathrm{bin}}(n)
=
4(2n-3).
}
\label{eq:binary-four-t}
\end{equation}

The second benchmark uses different ancillary,
measurement,
and feed-forward assumptions.
It is therefore reported separately.

For seven controls:
\begin{equation}\nonumber
N_{T,7}^{\mathrm{bin}}(7)
=
77,
\end{equation}
and
\begin{equation}\nonumber
N_{T,4}^{\mathrm{bin}}(7)
=
44.
\end{equation}

For fifteen controls:
\begin{equation}\nonumber
N_{T,7}^{\mathrm{bin}}(15)
=
189,
\end{equation}
and
\begin{equation}\nonumber
N_{T,4}^{\mathrm{bin}}(15)
=
108.
\end{equation}

These
$T$
counts cannot be compared directly with
$P_9$
counts without specifying encoded resource-state costs.

\subsection{Encoded resource functions}

Let
\begin{equation}\nonumber
\mathcal{F}_{P_9}
\left(
\varepsilon_{P_9},
p_3,
k_3
\right)
\end{equation}
denote the encoded carrier-cycle cost of preparing and injecting one
logical
$P_9$
resource state.

Let
\begin{equation}\nonumber
\mathcal{F}_{T}
\left(
\varepsilon_T,
p_2,
k_2
\right)
\end{equation}
denote the corresponding cost of one logical binary
$T$
resource state.

Here:

\begin{itemize}
\item
$p_2$
is the physical error probability of a binary location;
\item
$p_3$
is the physical error probability of a qutrit location;
\item
$k_2$
is the number of binary concatenation levels;
\item
$k_3$
is the number of ternary concatenation levels.
\end{itemize}

Write
\begin{equation}\nonumber
p_3
=
\delta p_2,
\qquad
\delta\geq1.
\end{equation}

For an allowed failure probability
$\varepsilon_{\MCT}$
for one logical MCT invocation, a conservative union-bound allocation
gives
\begin{equation}\nonumber
\varepsilon_{P_9}
=
\frac{
\varepsilon_{\MCT}
}{
6n+3
},
\end{equation}
and
\begin{equation}\nonumber
\varepsilon_T^{(q)}
=
\frac{
\varepsilon_{\MCT}
}{
q(2n-3)
},
\qquad
q\in\{4,7\}.
\end{equation}

The encoded qutrit-tree cost is
\begin{align}
V_{\mathrm{tree}}(n)
={}&
(6n+3)
\,
\mathcal{F}_{P_9}
\left(
\frac{
\varepsilon_{\MCT}
}{
6n+3
},
p_3,
k_3
\right)
\nonumber
\\
&
+
V_{\mathrm{rest}}^{(3)}(n). 
\label{eq:qutrit-volume}
\end{align}

The binary reference cost is
\begin{align}
V_{\mathrm{bin}}^{(q)}(n)
={}&
q(2n-3)
\,
\mathcal{F}_{T}
\left(
\frac{
\varepsilon_{\MCT}
}{
q(2n-3)
},
p_2,
k_2
\right)
\nonumber
\\
&
+
V_{\mathrm{rest}}^{(2,q)}(n),
\qquad
q\in\{4,7\}.
\label{eq:binary-volume}
\end{align}

The remainder terms include encoded Clifford operations,
syndrome extraction,
routing,
measurement,
reset,
feed-forward,
and ancillary preparation.

The qutrit tree is favorable whenever
\begin{equation}\nonumber
V_{\mathrm{tree}}(n)
<
V_{\mathrm{bin}}^{(q)}(n).
\end{equation}

\subsection{Magic-state-dominated sensitivity conditions}

In a first-order magic-state-dominated approximation,
neglect the remainder terms and approximately match the per-state
output accuracies.

Against the seven-$T$
unitary benchmark,
the qutrit tree is favorable when
\begin{equation}\nonumber
\boxed{
\frac{
\mathcal{F}_{P_9}
}{
\mathcal{F}_{T}
}
<
\frac{
7(2n-3)
}{
6n+3
}.
}
\label{eq:seven-t-crossover}
\end{equation}

Against the four-$T$
measurement-assisted benchmark,
the condition is
\begin{equation}\nonumber
\boxed{
\frac{
\mathcal{F}_{P_9}
}{
\mathcal{F}_{T}
}
<
\frac{
4(2n-3)
}{
6n+3
}.
}
\label{eq:four-t-crossover}
\end{equation}

For seven controls:
\begin{equation}\nonumber
\frac{
7(2n-3)
}{
6n+3
}
=
\frac{77}{45}
\approx
1.711,
\end{equation}
whereas
\begin{equation}\nonumber
\frac{
4(2n-3)
}{
6n+3
}
=
\frac{44}{45}
\approx
0.978.
\end{equation}

For fifteen controls:
\begin{equation}\nonumber
\frac{
7(2n-3)
}{
6n+3
}
=
\frac{189}{93}
\approx
2.032,
\end{equation}
whereas
\begin{equation}\nonumber
\frac{
4(2n-3)
}{
6n+3
}
=
\frac{108}{93}
\approx
1.161.
\end{equation}

Asymptotically:
\begin{equation}\nonumber
\lim_{n\rightarrow\infty}
\frac{
7(2n-3)
}{
6n+3
}
=
\frac{7}{3},
\end{equation}
and
\begin{equation}\nonumber
\lim_{n\rightarrow\infty}
\frac{
4(2n-3)
}{
6n+3
}
=
\frac{4}{3}.
\end{equation}

These inequalities are sensitivity conditions, not physical hardware
predictions.

\subsection{Illustrative sensitivity point}

As an illustrative point, take
\begin{equation}\nonumber
n=15,
\end{equation}
equal binary and ternary concatenation levels, and
\begin{equation}\nonumber
\mathcal{F}_{P_9}
=
1.5
\mathcal{F}_{T}.
\end{equation}

Against the seven-$T$
unitary benchmark:
\begin{equation}\nonumber
V_{\mathrm{tree}}^{\mathrm{magic}}
=
93
\times
1.5
\mathcal{F}_{T}
=
139.5
\mathcal{F}_{T},
\end{equation}
whereas
\begin{equation}\nonumber
V_{\mathrm{bin},7}^{\mathrm{magic}}
=
189
\mathcal{F}_{T}.
\end{equation}

The relative reduction is
\begin{equation}\nonumber
1-
\frac{
139.5
}{
189
}
\approx
0.262.
\end{equation}

Thus, under this illustrative assumption, the qutrit tree reduces
the magic-state-dominated cost by approximately
\begin{equation}\nonumber
26.2\%.
\end{equation}

The same point does not beat the four-$T$
measurement-assisted benchmark.
The comparison therefore depends on the selected binary fault-tolerant
stack.

\subsection{Concatenation-level relation}

The qutrit physical error rate may exceed the qubit physical error
rate.
Following the concatenated-code framework of
\cite{MajumdarHybridFT},
let
$1/c_2$
and
$1/c_3$
denote effective binary and ternary thresholds.

To reach a common logical accuracy,
\begin{equation}\nonumber
\boxed{
k_3
=
\left\lceil
k_2
+
\log_2
\left[
\frac{
\log(c_2p_2)
-
2^{-k_2}
\log(c_2/c_3)
}{
\log(\delta)
+
\log(c_3p_2)
}
\right]
\right\rceil.
}
\label{eq:concatenation-relation}
\end{equation}

In the equal-threshold case,
$c_2=c_3=c$,
this reduces to
\begin{equation}\nonumber
\boxed{
k_3
=
\left\lceil
k_2
+
\log_2
\left[
\frac{
\log(cp_2)
}{
\log(\delta)
+
\log(cp_2)
}
\right]
\right\rceil.
}
\label{eq:equal-threshold-relation}
\end{equation}

These expressions apply only below threshold:
\begin{equation}\nonumber
c_2p_2<1,
\qquad
c_3p_3<1.
\end{equation}

An additional ternary concatenation level may erase the logical
gate-count advantage.
A submission-grade physical-overhead estimate therefore requires
explicit encoded ternary Clifford gates,
$P_9$
state factories,
syndrome-extraction circuits,
and a calibrated qutrit noise model.

\subsection{Transversality caveat}

For the binary Steane code,
logical Clifford operations such as CNOT and Hadamard are
transversal, whereas the logical
$T$
gate is non-transversal.

For the all-qutrit stack,
the encoded implementation must keep the following costs explicit:

\begin{enumerate}
\item ternary SUM between encoded qutrit blocks;
\item
$C_2(\INC)^{\pm1}$;
\item
$C_2(X_{01})$;
\item qutrit syndrome extraction;
\item logical
$P_9$
state preparation and injection.
\end{enumerate}

The logical Clifford+$P_9$ counts in the main text are exact. The physical Steane-code comparison remains conditional on the selected encoded gates and physical noise model.

\end{document}